\documentclass[submission,copyright,creativecommons]{eptcs}
 % Name of the event you are submitting to

\usepackage{iftex}

\ifpdf
  \usepackage{underscore}         % Only needed if you use pdflatex.
  \usepackage[T1]{fontenc}        % Recommended with pdflatex
\else
  \usepackage{breakurl}           % Not needed if you use pdflatex only.
\fi

% Use the postscript times font!
\usepackage{times}
\usepackage{soul}
\usepackage{url}
\usepackage{hyperref}
\usepackage[utf8]{inputenc}
\usepackage[small]{caption}
\usepackage{graphicx}
\usepackage{amsmath}
\usepackage{amsthm}
\usepackage{booktabs}
\usepackage{algorithm}
\usepackage{algorithmic}
\usepackage[switch]{lineno}

% User defined
\usepackage{symbols}
\usepackage[capitalise]{cleveref}
\usepackage{tikz}
\usepackage[all]{foreign}

% Comment out this line in the camera-ready submission
%\linenumbers

\urlstyle{same}

\newtheorem{definition}{Definition}
\newtheorem{lemma}{Lemma}
\newtheorem{theorem}{Theorem}

%\title{An Example of a Paper\\ with a Rather Large Title-to-Content Ratio}
\title{Synthesis of Timeline-Based Planning \\ Strategies Avoiding
Determinization\thanks{This work is partially supported by the INdAM-GNCS
Project \emph{Analisi simbolica e numerica di sistemi ciberfisici} (project
n.~CUP\_E53C22001930001).}}

%\author{Rob van Glabbeek
%\institute{NICTA\\ Sydney, Australia}
%\institute{School of Computer Science and Engineering\\
%University of New South Wales\thanks{A fine university.}\\
%Sydney, Australia}
%\email{rvg@cs.stanford.edu}
%\and
%Co Author \qquad\qquad Yet S. Else
%\institute{Stanford University\\
%California, USA}
%\email{\quad is@gmail.com \quad\qquad somebody@else.org}
%}
\author{Renato Acampora
\institute{University of Udine, Italy}
\email{renato.acampora@uniud.it}
\and
Dario Della Monica
\institute{University of Udine, Italy}
\email{dario.dellamonica@uniud.it}
\and
Luca Geatti
\institute{University of Udine, Italy}
\email{luca.geatti@uniud.it}
\and
Nicola Gigante
\institute{Free University of Bozen-Bolzano, Italy}
\email{nicola.gigante@unibz.it}
\and Angelo Montanari
\institute{University of Udine, Italy}
\email{angelo.montanari@uniud.it}
\and Pietro Sala
\institute{University of Verona, Italy}
\email{pietro.sala@univr.it}
}

\begin{document}
\maketitle

\begin{abstract}
    Qualitative timeline-based planning models domains as sets of independent, but interacting, components whose behaviors over time, the timelines, are governed by sets of qualitative temporal constraints (ordering relations), called synchronization rules.
%Qualitative timeline-based planning only considers the ordering of events in a timeline; in the literature, the tools used to recognize a solution (valid) plan for such a framework are nondeterministic. 
Its plan-existence problem has been shown to be \PSPACE-complete; in particular,
\PSPACE-membership has been proved via reduction to the nonemptiness problem
for nondeterministic finite automata.
However, nondeterministic automata cannot be directly used to synthesize planning strategies as a costly determinization step is needed.
%
%However, nondeterministic automata
%are not suitable to be used as game arenas to synthesize strategies for the
%game-theoretic extension of the problem that admits an adversarial environment.
%Determinizing the automaton is costly and should be avoided if possible.
%
In this
paper, we identify a large fragment of qualitative timeline-based planning whose
plan-existence problem can be directly mapped into the nonemptiness problem of
deterministic finite automata, which can then %thus 
be exploited to synthesize strategies.
%, avoiding the determinization step.
%
%plan-existence problem can be directly mapped into the nonemptiness problem of
%finite-state automata that are deterministic instead, thus avoiding the 
%%recognizing solution plans 
%costly determinization step. The deterministic automaton can then be exploited to synthesize strategies for the qualitative timeline-based planning games.
%
% We first develop such deterministic automaton exploitable to synthesize solution plans for this fragment of qualitative timeline-based planning. 
In addition, we identify a  maximal subset of Allen's relations that fits into such a deterministic fragment. 
%show that the fragment captures the maximal subset of Allen's relations retaining determinism.
% Qualitative timeline-based planning models domains as a set of independent but interacting components whose behaviors over time, the timelines, are governed by a set of qualitative temporal constraints (ordering relations), called synchronization rules. 
% %Qualitative timeline-based planning only considers the ordering of events in a timeline; in the literature, the tools used to recognize a solution (valid) plan for such a framework are nondeterministic. 
% Its plan-existence problem has been shown to be \PSPACE-complete; \PSPACE-membership has been proved by a reduction to the nonemptiness problem for nondeterministic finite state automata.
% In this paper, we isolate a fragment of qualitative timeline-based planning whose plan-existence problem can be directly mapped into the nonemptiness problem of a deterministic finite state automaton, thus avoiding the 
% %recognizing solution plans 
% costly determinization step. The deterministic automaton can then be exploited to synthesize solution plans for the qualitative timeline-based planning problem.
% %We first develop such deterministic automaton exploitable to synthesize solution plans for this fragment of qualitative timeline-based planning. 
% In addition, we identify a  maximal subset of Allen's relations that fits into such a deterministic fragment. 
% %show that the fragment captures the maximal subset of Allen's relations retaining determinism.
\end{abstract}

%!TeX root = ../submission.tex

\section{Introduction}
\label{sec:intro}

\emph{Timeline-based} planning is an approach that originally emerged and developed in
the context of planning and scheduling of \emph{space}
operations~\cite{Muscettola94}. In contrast to common action-based formalisms,
such as PDDL~\cite{FoxL03}, timeline-based languages do not make a distinction between
actions, states, and goals. Rather, the domain is modeled as a set of
independent, but interacting, components whose behavior over time, the
timelines, is governed by a set of temporal constraints. 
It is worth pointing out that timeline-based planning was born with an application-oriented flavor, with various successful stories, and only relatively recently some foundational work about its expressiveness and complexity has been produced.
The present paper aims at bringing back theory to practice by identifying expressive enough and computationally well-behaved fragments. 

\sloppy
Timeline-based planning has been successfully employed by planning systems developed at
NASA~\cite{ChienRKSEMESFBST00,ChienSTCRCDLMFTHDSUBBGGDBDI04} and
at ESA~\cite{FratiniCORD11}  for both short- to long-term mission planning and
on-board autonomy. More recently, timeline-based planning systems such as
PLATINUm~\cite{UmbricoCMO17} are being employed in collaborative robotics
applications~\cite{UmbricoCO23}. All these applications share a deep reliance 
on \emph{temporal reasoning} and the need for a tight integration of planning
with \emph{execution}, both features of the timeline-based framework. 
The latter feature is usually achieved by the use of \emph{flexible timelines},
which represent a set of possible executions of the system that differ in the
precise timing of the events, hence handling the intrinsic \emph{temporal
uncertainty} of the environment. A formal account of timeline-based planning
with uncertainty has been provided by \cite{CialdeaMayerOU16}, and much
theoretical research followed, including
\emph{complexity}~\cite{BozzelliMMP18b,BozzelliMMP18a,GiganteMCO17} and
\emph{expressiveness}~\cite{DellaMonicaGMS18,GiganteMMO16} analyses, based on
such a formalization, which is the one we use here as well.

To extend the reactivity and adaptability of timeline-based systems
beyond temporal uncertainty, the framework of \emph{timeline-based games} has
been recently proposed. In timeline-based games, the system player tries to build a
set of timelines satisfying the constraints independently from the choices of
the environment player. This framework allows one to handle general
nondeterministic environments in the timeline-based setting. However, this
expressive power comes at the cost of increasing the complexity of the problem.
While the plan-existence problem for timeline-based planning is
EXPTIME-complete~\cite{GiganteMCO17}, deciding the existence of strategies for
timeline-based games is 2EXPTIME-complete~\cite{GiganteMOCR20}, and a controller
synthesis algorithm exists that runs in doubly exponential
time~\cite{AcamporaGGMP2022}. 

Such a high complexity motivates the search for simpler fragments %of the problem
that can nevertheless be useful in practical scenarios. One of these is the
\emph{qualitative} fragment, where temporal constraints only concern the
relative order between events and not their %metric, real-time 
distance. The qualitative fragment already proved itself to be easier for the 
plan-existence problem, being PSPACE-complete~\cite{DellaMonicaGTM20}, and 
this makes it a natural candidate for the search of a good fragment for the 
strategy existence problem. 

A \emph{deterministic} arena is crucial to synthesize a
non-clairvoyant strategy in \emph{reactive synthesis} problems~(see, for instance,
\cite{PnueliR89}). However, determinizing the nondeterministic (exponentially
sized) automaton built
for the qualitative case in \cite{DellaMonicaGTM20} would cause an exponential
blowup, thus resulting in a procedure of doubly-exponential complexity.
In this paper, we show that, by imposing some natural restrictions on the set of temporal constraints of the qualitative fragment, it is possible to lower the complexity of the strategy existence problem to EXPTIME.
%the kind of Allen's interval relations~\cite{Allen83} that are
%expressed in the problem's temporal constraints.
%
We show that, on the one hand, these restrictions are %\emph{necessary} and 
\emph{sufficient} to directly synthesize a \emph{deterministic} finite automaton (DFA) of singly-exponential size, thus usable as an arena to play the game in an asymptotically optimal way, and, on the other hand, the resulting fragment is expressive enough to capture a large subset of Allen's relations~\cite{Allen83}, defined in \cref{sec:allen}. 

The rest of the paper is organized as follows. \cref{sec:preliminaries} recalls
some background knowledge on timeline-based planning. \cref{sec:fragment} defines
the considered fragment, that directly maps into a DFA  
%introduces our proposed fragment on which is possible to construct DFAs 
of singly exponential size. \cref{sec:wordsToPlans} gives a word encoding of 
timelines, and vice versa. \cref{sec:dfaForPlans} builds an automaton 
to recognize plans, and \cref{sec:dfaForSolutions} shows how to construct 
an automaton that accepts solution plans. \cref{sec:allen} 
identifies the maximal subset of Allen's relations which is captured by
the fragment of \cref{sec:fragment}. Finally, \cref{sec:conclusion} summarizes
the main contributions of the work and discusses possible future developments.

%%% Local Variables:
%%% mode: latex
%%% TeX-master: "main"
%%% End:

%!TeX root = ../submission.tex

%\section{Timeline-based planning}
\section{Background}
\label{sec:preliminaries}

In this section, we recall the basic notions of timeline-based
planning and of its qualitative variant.
%, and the definition of timeline-based games.

\subsection{Timeline-Based Planning}

The key element of the framework is the notion of \emph{state variable}. 
Let $\N^+$ be the set of positive natural numbers.

\begin{definition}[State variable]
  \label{def:statevar}
  A \emph{state variable} is a tuple $x=(V_x,T_x,D_x)$, where:
  \begin{itemize}
  \item $V_x$ is the \emph{finite domain} of the variable;
  \item $T_x:V_x\to2^{V_x}$ is the \emph{value transition function}, which maps
        each value $v\in V_x$ to the set of values that can (immediately) follow it;
  \item $D_x:V_x\to\N^+\times(\N^+\cup\set{+\infty})$ is a function that maps
        each $v\in V_x$ to the pair $(d^{x=v}_{min},d^{x=v}_{max})$ of
        minimum and maximum durations allowed for intervals where $x=v$.
  \end{itemize}
\end{definition}

A \emph{timeline} is a finite sequence of \emph{tokens}, each denoting a value $v$ and (the duration of) a time interval $d$, that describes how a state variable $x$ behaves over time. % Formally:

\begin{definition}[Tokens and timelines]
  \label{def:timeline}
  A \emph{token} for $x$ is a tuple $\tau=(x,v,d)$, where $x$ is a state
  variable, $v\in V_x$ is the value held by the variable, and $d\in\N^+$ is
  the \emph{duration} of the token, with $D_x(v) = (d^{x=v}_{min},d^{x=v}_{max})$ 
  and $d^{x=v}_{min} \le d \le d^{x=v}_{max}$.
  A \emph{timeline} for a state variable $x$ is a finite sequence
  $\timeline=\seq{\tau_1,\ldots,\tau_k}$ of tokens for $x$, for some
  $k\in\N$, such that, for any $1\le i < k$, if $\tau_i = (x,v_i,d_i)$,
  then $v_{i+1} \in T_x(v_i)$.
\end{definition}

For any timeline $\timeline = \seq{\tau_1,\ldots,\tau_k}$ and any
token $\tau_i=(x,v_i,d_i)$ in $\timeline$, we define the functions
$\starttime(\timeline,i) = \sum_{j=1}^{i-1} d_j$ and
$\endtime(\timeline,i) = \starttime(\timeline,i) + d_i$. 
%These functions allow us to map each token to the corresponding time
%interval $[\starttime,\endtime)$ (right endpoint excluded). 
We call the \emph{horizon} of $\timeline$ the end time of the last token in
$\timeline$, that is, $\endtime(\timeline, k)$.  We write
$\starttime(\tau_i)$ and $\endtime(\tau_i)$ to indicate
$\starttime(\timeline,i)$ and $\endtime(\timeline,i)$, respectively, when
there is no ambiguity.

The overall behavior of state variables is subject to
%The problem domain and the goal are defined using 
a set of temporal constraints known as \emph{synchronization rules} (or
simply \emph{rules}). We start by defining their basic building blocks.
Let $\toknames$ be a finite set of \emph{token names}. \emph{Atoms} are
formulas of the following form:
\begin{equation*}
\begin{split}
  \mathit{atom} &\coloneqq \mathit{term} \before_{l,u} \mathit{term} \mid \mathit{term} \before*_{l,u} \mathit{term} \\
  \mathit{term} &\coloneqq \tokstart(a) \| \tokend(a) \| t
\end{split}
\end{equation*}
where $a\in\toknames$, $l,t\in\N$, and $u\in\N\cup\set{+\infty}$. 
%\lucaNote{Ho modificato le macro tokstart e startime in modo che abbiano
%la stessa definizione, altrimenti le due definizioni erano scollegate. E.g.
%start-time non compariva da nessuna parte nel paper.}
As an example, atom $\tokstart(a)\before_{l,u}\tokend(b)$ (resp.,
$\tokstart(a)\before*_{l,u}\tokend(b)$) relates tokens $a$ and $b$ by
stating that the end of $b$ cannot precede (resp., must succeed) the
beginning of $a$, and the distance between these two endpoints must
be at least $l$ and at most $u$.  An atom
$\mathit{term}\before_{l,u}\mathit{term}$, with $l=0$ and $u=+\infty$, is
\emph{qualitative} (the subscript is usually omitted in this case).

  % \darioNote{Ho aggiunto esplicitamente la relazione stretta $\before*$ perch\'e
  %   non \`e definibile nel frammento qualitativo e senza di essa perdiamo tanto
  %   potere espressivo: non riusciamo a catturare nessuna relazione di Allen
  %   eccetto meets e met-by.
  %   Secondo me non va detto niente a riguardo}

An \emph{existential statement} $\E$ is a constraint of the form:
  \[ \exists a_1[x_1=v_1]a_2[x_2=v_2] \ldots a_n[x_n=v_n]. \ \rulebody \]
where $x_1, \ldots, x_n$ are state variables, $v_1, \ldots, v_n$ are values,
with $v_i\in V_{x_i}$, $a_1,\ldots,a_n$ are symbols from the set \toknames of
token names, and $\rulebody$ is a finite conjunction of \emph{atoms}, involving only 
tokens $a_1,\ldots,a_n$, plus, possibly, the
  \emph{trigger token} (usually denoted by $a_0$) of the \emph{synchronization
    rule} in which the existential statement is embedded, as described below.\footnote{%
  W.l.o.g., we assume that if a token $a$ appears in the
  quantification prefix $\exists a_1[x_1=v_1]a_2[x_2=v_2] \ldots a_n[x_n=v_n]$ of \E, then at least one among $\tokstart(a)$ and $\tokend(a)$ occurs in one of its atoms. 
}

Intuitively, an existential statement asks for the existence of tokens
$a_1,a_2,\dots,a_n$ whose state variables take the corresponding values
$v_1,v_2,\dots,v_n$ and are such that their start and end times satisfy the
atoms in $\rulebody$.

\emph{Synchronization rules} are clauses of one of
the following forms:
\begin{align*}
  %\mathit{rule} & {} \mathrel{:=}
    a_0[x_0=v_0]\implies \E_1\lor\E_2\lor\ldots\lor\E_k \\ %, \ with \\
    \top \implies \E_1\lor\E_2\lor\ldots\lor\E_k
\end{align*}
where $a_0 \in \toknames$, $x_0$ is a state variable, $v_0 \in V_{x_0}$,
and $\E_i$ is an existential statement, for each $1\le i\le k$. 
In the former case, $a_0[x_0=v_0]$ is called \emph{trigger} and $a_0$ is the \emph{trigger token}, and the rule
is considered \emph{satisfied} if \emph{for all} the tokens $a_0$ for which
the variable $x_0$ takes the value $v_0$, at least one of the existential
statements is satisfied.
In the latter case, the rule is said to be \emph{triggerless}, and it
states the truth of the body without any precondition.%
\footnote{%
  W.l.o.g., for non-triggerless rules, we assume that both $\tokstart(a_0)$
  and $\tokend(a_0)$ occur in all of its existential statements.
}
We refer the reader to~\cite{CialdeaMayerOU16} for a formal account of the semantics
of the rules.

A \emph{timeline-based planning problem} consists of a set of state variables
and a set of rules that represent the problem domain and the goal.

\begin{definition}[Timeline-based planning problem]
  A \emph{timeline-based planning problem} is defined as a pair $P = (\SV, S)$,
  where $\SV$ is a set of state variables and $S$ is a set of synchronization
  rules involving state variables in $\SV$.
\end{definition}

A \emph{solution plan} for a given timeline-based planning problem is a set of
timelines, one for each state variable, that satisfies all the synchronization
rules.

\begin{definition}[Plan and solution plan]
  \label{def:plan}
  A \emph{plan} over a set of state variables $\SV$ is a finite set of timelines
  with the same horizon, one for each state variable $x\in\SV$.
  A \emph{solution plan} for a timeline-based planning problem $P=(\SV,S)$ is a
  plan over \SV such that all the rules in $S$ are satisfied. 
  %and, for all $x \in
  %\SV$ and all tokens $\tau_i=(x,v_i,d_i)\in\timeline_x$, it holds that
  %$v_{i+1}\in T_x(v_i)$ and $d_{min}^{x=v_i}\le d_i \le d_{max}^{x=v_i}$.
\end{definition}

The problem of determining whether a solution plan exists for a given
timeline-based planning problem is \EXPSPACE-complete \cite{GiganteMCO17}.
%
% \cite[Theorem 2]{GiganteMCO17}.

\begin{definition}[Qualitative timeline-based planning]
\label{def:qual:tbp}
    A timeline-based planning problem $P=(\SV,S)$ is said to be \emph{qualitative} if the following conditions hold: 
    \begin{enumerate}
    \item $D_x(v)=(1,+\infty)$, for all state variables $x\in\SV$ and $v\in
      V_x$.
    \item all synchronization rules in $S$ involve only \emph{qualitative}
      atoms.
    \end{enumerate}
\end{definition}

% Although real-time constraints cannot be expressed using qualitative
% synchronization rules, they still offer a versatile and expressive
% formalism. Equalities between token endpoints and whole tokens, for example, can
% be defined as follows: $\tokstart(a) = \tokstart(b)$ is equivalent to
% $\tokstart(a) \before \tokstart(b) \land \tokstart(b) \before \tokstart(a)$, and
% $a = b$ is equivalent to $\tokstart(a) = \tokstart(b) \land \tokend(a) =
% \tokend(b)$. Further, non-strict variants of Allen's interval relations can be
% expressed: for example, $a \meets b$ can be defined as $\tokend(a) =
% \tokstart(b)$; likewise, $b \during a$ can be defined as $\tokstart(a) \before*
% \tokstart(b) \land \tokend(b) \before* \tokend(a)$.
% \lucaNote{Serve il paragrafo di sopra? Giustamente è importante perchè
% motiva il frammento qualitativo, ma potrebbe essere shrinkato?}

%\darioNote{Talking of Allen's relations is tricky, since we cannot express the
%  \emph{real} during, which would require the ability of expressing strict
%  inequalities: $\tokstart(a) < \tokstart(b) \land \tokend(b) < \tokend(a)$.
%
%  \vspace{10mm}
%
%}

Unlike timeline-based planning, such a qualitative variant is
PSPACE-complete~\cite{DellaMonicaGTM20}.  A reduction of qualitative
timeline-based planning to the nonemptiness problem for non-deterministic
finite automata (NFA) has been provided in~\cite{DellaMonicaGTM20}.

\section{A Well-Behaved Fragment}
\label{sec:fragment}

%\darioNote{Maybe it is not necessary a new section, but rather merge in the
%  previous section what needs to be said here.
%}

In this section, we introduce a meaningful fragment of qualitative timeline-based
planning for which we will show that it is
possible to construct DFAs of singly exponential size. 

The fragment is characterized  by means of some conditions on the admissible patterns of synchronization rules (eager rules). The distinctive feature of eager rules is that they can be checked using an eager/greedy strategy, that is, when a relevant event (start/end of a token involved in some atom) occurs, we are guaranteed that the starting/ending point of such a token is useful for rule satisfaction. Instead, in case of non-eager rules, it may happen that a relevant event happens that is not useful for rule satisfaction: some analogous event in the future will be.

W.l.o.g., we assume that no constraint of the forms $\tokstart(a) \before \tokend(a)$ and $\tokstart(a) \before* \tokend(a)$ occurs explicitly in synchronization rules, even though they hold tacitly, as they follow from the definition of token (Definition \ref{def:timeline}).

As a preliminary step, we define a sort of transitive closure of a clause.
First, by slightly abusing the notation, we identify a clause \rulebody %is a finite conjunction of atoms. , \rulebody 
with the finite set of atoms occurring in it. Let $t$, $t_1$, $t_2$, $t_3$ be
terms of the form $\tokstart(a)$ or $\tokend(a)$, with $a \in \toknames$.
We denote by $\hat \rulebody$ the \emph{transitive closure} of $\rulebody$,
defined as the smallest set of atoms including $\rulebody$ and such that:
\begin{enumerate*}[label={\it (\roman*)}]
\item if term $t$ occurs in $\rulebody$, then atom $t \before t$ belongs to
  $\hat \rulebody$,
\item if terms $\tokstart(a)$ and $\tokend(a)$ both occur in $\rulebody$ for
  some token name $a$, then atom $\tokstart(a) \before* \tokend(a)$ belongs to
  $\hat \rulebody$,
\item if atom $t_1 \before* t_2$ belongs to $\hat
  \rulebody$, then atom $t_1 \before t_2$ belongs to
  $\hat \rulebody$ as well,
\item if atoms $t_1 \before t_2$ and $t_2
  \before t_3$ belong to $\hat \rulebody$, then atom $t_1\before t_3$
  belongs to $\hat \rulebody$ as well,
\item if atoms $t_1 \before* t_2$ and $t_2
  \before t_3$ belong to $\hat \rulebody$, then atom $t_1 \before* t_3$ 
  belongs to $\hat \rulebody$ as well,
\item if atoms $t_1 \before t_2$  and $t_2
  \before* t_3$ belong to $\hat \rulebody$, then atom $t_1 \before* t_3$ belongs to $\hat \rulebody$ as well.
\end{enumerate*}%
\footnote{%
  W.l.o.g., we assume that $\hat\rulebody$ is consistent, \ie it admits at
  least a solution. We point out that this check can be done in polynomial
  time, since it is an instance of linear programming.
}

Notice that, in some particular cases, condition {\it (ii)} may introduce in the closure of a clause atoms of the form $\tokstart(a) \before* \tokend(a)$, which, according to our assumption, do not belong to any clause.

Let us  now define the core notion of \emph{eager rule}.

\begin{definition}[Eager rules]
\label{def:eager:rule}
  Let $\Rule$ be a synchronization rule and let $\rulebody_1, \ldots,
  \rulebody_k$ be the clauses occurring in its existential statements.
%  If $\Rule$ is not triggerless, let $a_0[x_0 = v_0]$ be its trigger.
  We say that $\Rule$ is \emph{eager} if and only if, for all $\rulebody \in \{
  \rulebody_1, \ldots, \rulebody_k \}$ and   
  $a_1,a_2 \in \toknames$ appearing in $\rulebody$, 
  %and each $a_1,a_2 \in \toknames$ appearing in $\Rule$, 
  % such that $a_1$ and $a_2$ are a non-trigger tokens (i.e.,$a_1 \neq a_0$ and $a_2 
  % \neq a_0$), 
  the following conditions hold:
  \begin{enumerate}
    \item  if both $a_1$ and $a_2$ are non-trigger tokens and $\{ \tokstart(a_2) \before \tokend(a_1), \tokend(a_1)
      \before \tokend(a_2) \} \subseteq \hat \rulebody$, then
      $\tokend(a_1) \before \tokstart(a_2) \in \hat \rulebody$
      (i.e., the end of $a_1$ and the start of $a_2$ coincide),
      \item if $a_1$ is either a
      trigger token or a non-trigger one, 
      $a_2$ is a non-trigger token, and
 %    for all  $t \in \{ \tokstart(a_1)\}$,
      $\{ \tokstart(a_2) \before \tokstart(a_1), \tokstart(a_1)
      \before \tokend(a_2) \} \subseteq \hat \rulebody$, then
      $\tokstart(a_1) \before \tokstart(a_2) \in \hat \rulebody$ (i.e., $a_1$ and $a_2$ start together), and
    \item if $a_1$ is a trigger token, $a_2$ is a non-trigger one,     
%    if $\{ \tokstart(a_1) \before \tokstart(a_0), \tokstart(a_0) \before
%      \tokend(a_1) \} \subseteq \hat \rulebody$, 
     and $\{ \tokstart(a_1) \before \tokstart(a_2), \allowbreak \tokend(a_1) \before \tokend(a_2) \} \subseteq \hat \rulebody$, then $\tokstart(a_2) \before \tokstart(a_1)  \in \hat \rulebody$
     (i.e., $a_1$ and $a_2$ start together).
  \end{enumerate}
\end{definition}
%%% PREVIOUS DEFINITION
%\begin{definition}
%\label{def:eager:rule}
%  Let $\Rule$ be a synchronization rule and let $\rulebody_1, \ldots,
%  \rulebody_k$ be the clauses occurring in its existential statements.
%  If $\Rule$ is not triggerless, let $a_0[x_0
%  = v_0]$ be its trigger.
%
%  We say that $\Rule$ is \emph{eager} if and only if, for every $\rulebody \in
%  \{ \rulebody_1, \ldots, \rulebody_k \}$, none of the following conditions
%  holds:
%  \begin{enumerate}
%  \item $\{ \tokstart(a_i) \before \tokstart(a_j), \tokstart(a_j) \before
%    \tokend(a_i) \} \subseteq \hat \rulebody$ and $\tokstart(a_j) \before
%    \tokstart(a_i) \notin \hat \rulebody$, 
%    for some $a_i,a_j \in \toknames$ such that $a_i \neq a_0$;
%    %%% PREVIOUS VERSION
%    %where $a_i$ is a non-trigger token (i.e., $a_i \neq a_0$) and $a_j$ is
%    %either the trigger token or a non-trigger one;
%  \item $\{ \tokstart(a_0) \before \tokstart(a_i), \tokend(a_0) \before
%    \tokend(a_i) \} \subseteq \hat\rulebody$ and $\tokstart(a_i) \before
%    \tokstart(a_0) \notin \hat\rulebody$, 
%    for some $a_i \in \toknames$ such that $a_i \neq a_0$.
%    %%% PREVIOUS VERSION
%    %where $a_0$ is the trigger token and $a_i$ is a non trigger one.
%  \end{enumerate}
%\end{definition}
%
% \darioNote{the conditions above might be not the only ones, we may possibly
% discover new ones as we go through the proofs.}

We define the \emph{eager fragment} of a qualitative timeline-based
planning problem as the set of qualitative timeline-based planning problems
$P = (\SV, S)$ such that $S$ contains only eager rules.
%We do the same for qualitative timeline-based games.

An explanation of the restrictions in~\cref{def:eager:rule} is due. Given
a non-trigger token $a_2$, Condition 1 forces any other non-trigger token $a_1$ ending during $a_2$ (that is, such that $\tokstart(a_2) \le \tokend(a_1) \le
\tokend(a_2)$) to end exactly when $a_2$ starts, while Condition 2 forces any other (trigger or non-trigger) token $a_1$ starting during $a_2$ (that is, such that $\tokstart(a_2) \le \tokstart(a_1) \le \tokend(a_2)$) to start simultaneously to $a_2$. Finally, whenever a non-trigger token $a_2$ starts during a trigger token $a_1$ and ends not before the end of $a_1$, Condition 3 forces the two tokens to start at the same time.

Conditions 1, 2, and 3 suffice to obtain a singly exponential DFA, whose construction will be illustrated in the next
sections. We give here a short intuitive account of the rationale of the above conditions.

Consider the following rule:
\begin{align*}
  a_0[x_0=v_0]&\implies \exists a_1[x_1=v_1]. \\
  &(\tokstart(a_0) = \tokstart(a_1) \wedge \tokend(a_0) \before \tokend(a_1)),
\end{align*}
where $\tokstart(a_0) = \tokstart(a_1)$ is an abbreviation for $\tokstart(a_0)
\before \tokstart(a_1) \wedge \tokstart(a_1) \before \tokstart(a_0)$.
This rule is eager because Conditions 1, 2, and 3 are fulfilled; in
particular, we have that $\tokstart(a_0) = \tokstart(a_1)$. This is crucial
for any DFA $\autom$ recognizing solution plans, because, when $\autom$
reads the event $\tokstart(a_0)$, it can \emph{eagerly} and
\emph{deterministically} go to a state representing the fact that both
$\tokstart(a_0)$ and $\tokstart(a_1)$ have happened.  Moreover, if later it
reads the event $\tokend(a_1)$, but it has not read $\tokend(a_0)$ yet, then
it transitions to a rejecting state, that is, a state from which it cannot
accept any plan.

Let us provide now an example of a non-eager rule that cannot be checked in an
eager/greedy fashion.
Consider the rule obtained from the above one by replacing $=$ with $\before$:
\begin{align*}
  a_0[x_0=v_0]&\implies \exists a_1[x_1=v_1]. \\
  &(\tokstart(a_0) \before \tokstart(a_1) \wedge \tokend(a_0) \before \tokend(a_1)).
\end{align*}
This rule is \emph{not} eager, because atom $\tokstart(a_1)
\before \tokstart(a_0)$ does not belong to $\hat\rulebody$ (Condition 3 is
violated). Indeed, for this rule, a DFA $\autom$ that first reads event
$\tokstart(a_0)$, but not  $\tokstart(a_1)$, and
%alone (without simultaneously reading $\tokstart(a_1)$) and
then, strictly after, reads event $\tokstart(a_1)$ has to
\emph{nondeterministically} guess the order between the end of such a token
$a_1$ and the end of $a_0$, making  the construction of an automaton of
singly exponential size impossible in the general case. Indeed, if token $a_1$ ends before token $a_0$, the rule is not satisfied, but we cannot exclude the existence of another token for $x_1=v_1$ that starts after that one and ends after the end of $a_0$, thus satisfying the rule. 

%\darioNote{
%  Add the following.
%  \begin{itemize}
%  \item Explain the intuition behind eager rules: can be checked using an
%    eager/greedy strategy, that is, when a relevant event (start/end of a token
%    involved in some atom) happens, we are guaranteed that that token
%    starting/ending point is useful for the rule satisfaction.
%    Instead, for non-eager rule, it can happen that a relevant event happens
%    that is not useful for the rule satisfaction: some analogous event in the
%    future will be.
%  \item Give an example for a non-eager rule that cannot be checked in an
%    eager/greedy fashion.
%  \end{itemize}
%}

We conclude by showing that excluding 
constraints of the forms $\tokstart(a) \before \tokend(a)$ and $\tokstart(a) \before* \tokend(a)$ from clauses makes it sometimes possible to turn an otherwise non-eager rule into an eager one.
As an example, rule $a_0[x_0=v_0]\implies \exists a_1[x_1=v_1].
(\tokstart(a_1) \before* \tokend(a_1) \wedge \tokstart(a_0) = \tokend(a_1))$
is not eager (Condition 2 is violated);
however, it can be rewritten as $a_0[x_0=v_0]\implies \exists
a_1[x_1=v_1]. \tokstart(a_0) = \tokend(a_1)$, which is eager.
%This phenomenon is caused by the explicit presence of the constraint $\tokstart(a_1) \before* \tokend(a_1)$; therefore, we assume, without loss of generality, that no constraint of the form $\tokstart(a) \before \tokend(a)$ or $\tokstart(a) \before* \tokend(a)$ occurs explicitly in synchronization rules (even though they hold tacitly).

In what follows, we give a reduction from the plan-existence problem for
the eager fragment of the qualitative timeline-based planning problem to
the nonemptiness problem of DFAs of \emph{singly exponential} size with 
respect to the original problem.  The approach is inspired by those in
\cite{DellaMonicaGTM20,DellaMonicaGMS18} for non-eager timeline-based
planning problems, where an NFA of exponential size is built for any
timeline-based planning problem.  However, the reductions presented there
use nondeterministic automata, which cannot be used as arenas to solve
timeline-based games without a previous determinization step that would
cause a second exponential blowup.

First, we show how to encode timelines and plans as finite words, and vice versa
(Section~\ref{sec:wordsToPlans}).  Then, given a planning problem $P$, we
show how to build a DFA whose language encodes the set of solution plans
for $P$.  The DFA consists of the intersection of two DFAs: one aims at
verifying the constraint on the alternation of token values expressed by
functions $T_x$, for $x \in \SV$, as well as that the word correctly
encodes a plan over \SV (Section~\ref{sec:dfaForPlans}); the other one
verifies that the encoded plan is indeed a solution plan for $P$
(Section~\ref{sec:dfaForSolutions}).

From now on, we consider only qualitative timeline-based planning problems
belonging to the eager fragment and, for the sake of brevity, we sometimes
refer to them simply as \emph{planning problems}.

%Parte sulle regole che non legano i token tigger ai token non trigger
%Without loss of generality,

%%% Local Variables:
%%% mode: latex
%%% TeX-master: "main"
%%% End:

%!TeX root = ../submission.tex

\section{From Plans to Finite Words and \textit{Vice Versa}}
\label{sec:wordsToPlans}

In this section, as a first step in the construction of the DFA
corresponding to an eager qualitative timeline-based planning problem, we
show how to encode timelines and plans as \emph{words} that can be
recognized by an automaton, and \viceversa.

Let $P = (\SV, S)$ be an eager qualitative timeline-based planning
problem, and let $V = \cup_{x\in \SV}V_x$.
% Without loss of generality, we consider only problems $P$ whose state variables have trivial transition functions, meaning for each $x \in \SV$ and $v\in V_x$, either $T_x(v)=V_x$ or $T_x(v)=\emptyset$.
% Define the alphabet $\Sigma$ associated with $P$ as the set $\Sigma=\{\sigma:\SV\to
% V\cup\{\circlearrowleft\}\mid\sigma(x)\in V_x\cup\{\circlearrowleft\}$, i.e., symbols map each state variable $x$ to a value in its domain $V_x$ or to the special symbol $\circlearrowleft$.
We define the \emph{initial alphabet} $\Sigma_\SV^I$ as $(\{ - \} \times
V)^\SV$, that is the set of functions from $\SV$ to $(\{ - \} \times V)$.%
\footnote{%
  The symbol $\{ - \}$ is a technicality that allows us to consider pairs
  instead of just values in $V$.
}
Similarly, we define the \emph{non-initial alphabet} $\Sigma_\SV^N$ as $((V
\times V) \cup \{ \circlearrowleft \} )^\SV$, where the pairs $(v,v') \in
V \times V$ are supposed to represent the value $v$ of the token that just
ended and the value $v'$ of the token that has just started, and
$\circlearrowleft$ represents the fact that the value for the state
variable has not changed.
The \emph{input alphabet} (or, simply, \emph{alphabet}) associated with
$\SV$ and denoted by $\Sigma_\SV$ is the union $\Sigma_\SV^I \cup
\Sigma_\SV^N$.
%
% $\Sigma=\{\sigma:\SV\to
% ((V \cup \{-\}) \times (V  \cup \{-\})) \cup \circlearrowleft\mid\sigma(x)\in (V_x \cup \{-\}, V_x \cup \{\circlearrowleft\} \cup \{-\})\}$
%
%i.e., \emph{initial symbols} in $\Sigma_\SV^I$ are functions mapping state
%variables into pairs of the form $(-,v)$, with $v \in V$, while
%\emph{non-initial symbols} in $\Sigma_\SV^N$ are mappings from state variables
%into pairs of values in $V$.
%
% the first for the end value and the second for the start value of a token for state variable $x$ or to $\circlearrowleft$.
%
Observe that the size of the alphabet $\Sigma_\SV$ is at most exponential
in the size of $\SV$, precisely $\abs{\Sigma_\SV} = \abs{\Sigma_\SV^I}
+ \abs{\Sigma_\SV^N} = \abs{V}^{\abs{\SV}} + (\abs{V}^2 + 1)^{\abs{\SV}}$.
%
%$(\abs{V}+2)^{2\abs{\SV}}$.

% and $\Sigma^F=\set{\sigma\in\Sigma \mid \forall x\in\SV . \sigma(x)=(v \in V_x, -)}$ as the set of final symbols. FINAL SYMBOLS ?

% % Observe that the size of the alphabet is at most exponential in the size of $P$, precisely $\abs{\Sigma}\le(\abs{V}+1)^{\abs{\SV}}$. NON SO SE VALE ANCORA
% Let the set of initial symbols be $\Sigma^I=\set{\sigma\in\Sigma \mid \forall
% x\in\SV . \sigma(x)\ne\circlearrowleft}$.

We now show how to encode the basic structure%
\footnote{%
  With ``basic structure'' we refer to the fact that, in this section, we
  neither take into account the transition functions $T_x$ of state
  variables nor their domains $V_x$ (\cf~\cref{def:statevar}), which will be
  dealt with in~\cref{sec:dfaForPlans}.
}
underlying each plan over \SV as a word in $\Sigma_\SV^I \cdot
(\Sigma_\SV^{N})^* \cup \{ \varepsilon \}$, where $\varepsilon$ is the
empty word (and corresponds to the empty plan), $(\Sigma_\SV^{N})^*$ is the
Kleene's closure of $\Sigma_\SV^{N}$, and $\cdot$ denotes the concatenation
symbol.
Intuitively, let $\nu$ be the symbol at position $i$ of a word $\sigma \in
\Sigma_\SV^I \cdot (\Sigma_\SV^{N})^* \cup \{ \varepsilon \}$.
Then, if $\nu(x) = (v, v')$ for some $x \in \SV$, then at time $i$ a new token
begins in the timeline for $x$ with value $v'$; instead, if $\nu(x) =
\circlearrowleft$, then no change happens at time $i$ in the timeline for $x$,
meaning that no token ends at that time point in the timeline for $x$.
The value $v$ of the token ending at time $i$ will be used later in the
construction of the automata.

We remark that not all words in $\Sigma_\SV^I \cdot (\Sigma_\SV^{N})^* \cup
\{ \varepsilon \}$ correspond to plans over \SV: for a word to correctly
encode a plan, the information carried by the word about the value of
a starting token and the one associated to the end of the same token must
coincide.
Formally, given a word $\sigma=\seq{\sigma_0,\ldots,\sigma_{|\sigma|-1}}
\in \Sigma_\SV^I \cdot (\Sigma_\SV^{N})^* \cup \{ \varepsilon \}$ and
a state variable $x \in \SV$, let $\mathit{changes}(x)
= (i^x_0,i^x_1,\ldots,i^x_{k^x-1})$, for some $k^x \in \N$, be the
increasing sequence of positions where $x$ changes, \ie $\sigma_i(x) \neq
\circlearrowleft$ if and only if $i \in \mathit{changes}(x)$, for all $i
\in \{ 0, \ldots, |\sigma|-1 \}$.  We denote by $v^x_{i}$ and $\hat
v^x_{i}$ the first and the second component of $\sigma_i(x)$, respectively,
for all $x \in \SV$ and $i \in \mathit{changes}(x)$.
We omit superscripts $^x$ when there is no risk of ambiguity.

\begin{definition}[Words weakly-encoding plans]
\label{def:weak:encode}
  Let $\sigma \in \Sigma_\SV^I \cdot (\Sigma_\SV^{N})^*$ and let
  $\mathit{changes}(x) = (i_0,i_1,\ldots,i_{k-1})$.  We say that
  \emph{$\sigma$ weakly-encodes a plan over \SV} if $\hat v_{i_{h-1}}
  = v_{i_{h}}$ for all $x \in \SV$ and $h \in \{ 1, \ldots, k-1 \}$.  If
  this is the case, then the \emph{plan induced by $\sigma$} is the set $\{
  \timeline_x \mid x \in \SV \}$, where $\timeline_x=\seq{(x,\hat
  v_{i_0},i_1-i_0), (x,\hat v_{i_1},i_2-i_1), \ldots, (x,\hat v_{i_{k-1}},
  i_k-i_{k-1})}$ and $i_k = |\sigma|$, for all $x \in \SV$.
\end{definition}

Intuitively, if a word weakly-encodes a plan, then it captures
the dynamics of a state variable modulo its domain and its transition function,
which will be taken care of in the next section.
%
%%% LUCA: ho eliminato questa frase perchè imho potrebbe confondere
%%% i reviewers: in questa sezione lo stress è passare da piani a parole
%%% (non il viceversa).
%
%%% Dario: l'ho rimessa dentro per 2 motivi:
%%% 1. non è vero che in questa sezione lo stress è passare da piani a parole
%%%    (non il viceversa). Nel titolo diciamo proprio il contrario
%%% 2. inoltre,
%%%    la definizione sopra fa esattamente la direzione opposta; volendo,
%%%    la cosa più importante è la "converse correspondence"
%%%    L'idea qui è che diamo un verso e così il lettore capisce ma
%%%    la codifica funziona in entrambe le direzioni
%%% Comunque ne parliamo domani se vuoi
%
A converse correspondence from plans to words can be
defined accordingly.

Before concluding the section, we introduce another notation that will come
handy later.
We denote by \eventsOnesigma the set of events (beginning/ending of a token)
occurring at a given time, encoded in the alphabet symbol $\sigma$.
Formally, \eventsOnesigma is the smallest set such that:
\begin{itemize}
\item if $\sigma(x) = (v,v') $ for some $x$, then
    $\{ \mathit{end}(x,v),\mathit{start}(x,v') \} \subseteq
    \eventsOnesigma$, and
\item if $\sigma(x) = (-,v')$ for some $x$, then $\mathit{start}(x,v') \in
  \eventsOnesigma$.
\end{itemize}

%%% Local Variables:
%%% mode: latex
%%% TeX-master: "main"
%%% End:

%!TeX root = ../submission.tex

\section{DFA Accepting Plans}
\label{sec:dfaForPlans}

Given an eager qualitative timeline-based planning problem $P = (\SV, S)$, we
show how to build a DFA $\mathcal T_{\SV}$, of size at most exponential in the
size of %\SV, and thus the one of 
$P$, accepting words that correctly encode plans over \SV, that is, words
that weakly-encode plans (\cf~\cref{def:weak:encode}) \emph{and} comply
with the constraints on the alternation of token values expressed by
functions $T_x$, for $x \in \SV$.
In the next section, we show how to obtain a DFA, of size at most
exponential in the size of $P$, that accepts exactly the \emph{solution
plans} for $P$.

For every planning problem $P = (\SV, S)$, the DFA $\mathcal T_{\SV}$ is
the tuple $\langle Q_{\SV}, \Sigma_\SV, \delta_{\SV}, q^0_{\SV}, F_{\SV}
\rangle$, whose components are defined as follows.
\begin{itemize}

\item $Q_{\SV}$ is the set of \emph{states} of $\mathcal{T}_{\SV}$.
  Intuitively, a state of $\mathcal{T}_{\SV}$ keeps track of the token values of
  the timelines at the current and the previous step of the run.
  Therefore, a state is a function mapping each state variable $x$ into a pair
  $(v,v')$, where $v'$ (resp., $v$) denotes the token value of timeline $x$ at
  the current (resp., previous) step.
  To formally define $Q_{\SV}$, we exploit the definition of alphabet
  $\Sigma_\SV$ from Section~\ref{sec:wordsToPlans}.
  Mostly, states are alphabet symbols, except for those functions $\sigma \in
  \Sigma_\SV$ assigning to at least one state variable $x \in \SV$
  value $\circlearrowleft$.
  For technical reasons, we also need a fresh \emph{initial state} $q^0_{\SV}$
  and a fresh \emph{rejecting sink state} \sinkT.

  Formally, $Q_{\SV} = \left(\Sigma_\SV \setminus \bar{Q}_{\SV} \right) \cup \{
  q^0_{\SV}, \sinkT \}$, where $\bar{Q}_{\SV} = \{ \sigma \in \Sigma_{\SV} \mid
  \sigma(x) = \circlearrowleft \text{ for some } x \in \SV \}$.
  Clearly, the size of $Q_{\SV}$ is at most as the size of $\Sigma_\SV$, which
  is in turn at most exponential in the size of $P$.

%
%\item $Q_{\SV} = \Sigma_\SV^I \cup (V \times V)^\SV \cup \{ q^0_{\SV} \}$ is the
%  set of \emph{states}, where $q^0_{\SV}$ is a (fresh) initial state.
%  Notice that $Q_{\SV}$ is the set $\Sigma_\SV$ devoid of functions $\sigma$
%  assigning $\circlearrowleft$ to at least one state variable $x \in \SV$, and
%  including a fresh \emph{initial state} $q^0_{\SV}$.
%

\item $\Sigma_\SV$ is the \emph{input alphabet}, defined as in
  Section~\ref{sec:wordsToPlans}.
\item $\delta_{\SV} : Q_{\SV} \times \Sigma_\SV \rightarrow Q_{\SV}$ is
  the \emph{transition function}.
  Towards a definition of $\delta_{\SV}$, we say that an alphabet symbol $\sigma
  \in \Sigma_\SV$ is \emph{compatible} with a state $\sigma_1 \in Q_{\SV}$ (we
  use for states the same symbols as for the alphabet, i.e., $\sigma, \sigma_1,
  \sigma_2, \ldots$, to stress the fact that states are closely related to
  alphabet symbols) if one of the following holds:
  \begin{enumerate*}[label={\it (\roman*)}]
  \item $\sigma_1 = q^0_{\SV}$ is the initial state and $\sigma \in
    \Sigma_\SV^I$ is an initial symbol such that for each $x \in \SV$ it holds
    that $\sigma(x) = (-,v)$ with $v \in V_x$;
  \item $\sigma_1 = (v,v') \in \Sigma_\SV \setminus \bar{Q}_{\SV}$ and
    $\sigma \in \Sigma_\SV^N$ is a non-initial symbol such that for each $x
    \in \SV$ either $\sigma(x) = \circlearrowleft$ or $\sigma(x) = (v', v'')$
    with $v'' \in T_x(v') \cap V_x$.
  \end{enumerate*}

%
%  Towards a definition of $\delta_{\SV}$, we say that an alphabet symbol $\sigma
%  \in \Sigma_\SV$ is \emph{compatible} with a state $\sigma_1 \in Q_{\SV}$ (we
%  use for states the same symbols as for the alphabet, i.e., $\sigma, \sigma_1,
%  \sigma_2, \ldots$, to stress the fact that states are closely related to
%  alphabet symbols) if, for each $x \in \SV$, either $\sigma(x) =
%  \circlearrowleft$ or $\sigma_1(x) = (v,v')$, $\sigma(x) = (v', v'')$, and $v''
%  \in T_x(v')$.
%

%  Now, $\delta_{\SV} : Q_{\SV} \times \Sigma_\SV \to Q_{\SV}$ is defined as
%  follows; for all $\sigma_1 \in Q_{\SV}$ and $\sigma \in \Sigma_\SV$:
%  \begin{itemize}
%  \item if $\sigma \in \Sigma_{\SV}^I$, then $\delta(q^0_{\SV}, \sigma) =
%    \sigma$;
%  \item if $\sigma \in \Sigma_\SV \setminus \Sigma_\SV^I$, then
%    $\delta(q_{\SV}^0, \sigma)$ is undefined;
%  \item if $\sigma_1 \neq q^0_{\mathcal T}$ and alphabet symbol $\sigma$ is
%    compatible with state $\sigma_1$, then $\delta(\sigma_1, \sigma) =
%    \sigma_2$, where $\sigma_2(x) = \sigma_1(x)$ if $\sigma(x) =
%    \circlearrowleft$, and $\sigma_1(x) = \sigma(x)$ otherwise;
%  \item if $\sigma_1 \neq q^0_{\mathcal T}$ and alphabet symbol $\sigma$ is not
%    compatible with state $\sigma_1$, then $\delta(\sigma_1, \sigma)$ is
%    undefined.
%  \end{itemize}

  Now, $\delta_{\SV} : Q_{\SV} \times \Sigma_\SV \to Q_{\SV}$ is defined as
  follows.
  For all $\sigma_1 \in Q_{\SV}$ and $\sigma \in \Sigma_\SV$, if $\sigma$ is not
  compatible with $\sigma_1$ or $\sigma_1$ is the sink state (i.e., $\sigma_1 =
  \sinkT$), then $\delta(\sigma_1, \sigma) = \sinkT$; otherwise
  \begin{itemize}
  \item if $\sigma_1$ is the initial state (i.e., $\sigma_1 = q^0_{\SV}$), then
    $\delta(\sigma_1, \sigma) = \sigma$; in other words, in this case the
      automaton transitions to the state represented by the input letter;
  \item if $\sigma_1 \in \Sigma_\SV \setminus \bar{Q}_{\SV}$, then
    $\delta(\sigma_1, \sigma) = \sigma_2$, where $\sigma_2(x)
      = \sigma_1(x)$ if $\sigma(x) = \circlearrowleft$, and $\sigma_2(x)
      = \sigma(x)$ otherwise, for all $x \in \SV$; intuitively, the
      automaton transitions into a state keeping track of the updated
      information about which tokens have changed value and which ones have not.
  \end{itemize}
  We point out that, in both cases, the automaton transitions to the next state in a deterministic fashion.
\item $F_{\SV} = Q_{\SV} \setminus \{ \sinkT \}$ is the set of \emph{final
  states}.
\end{itemize}

Correctness of the DFA $\mathcal T_{\SV}$ is proved by the next
lemma.

\begin{lemma}
  Let $P = (\SV, S)$ be an eager qualitative timeline-based planning problem.
  Then, words accepted by $\mathcal T_{\SV}$ are exactly those encoding plans
  over \SV.
  Moreover the size of $\mathcal T_{\SV}$ is at most exponential in the size of
  $P$.
\end{lemma}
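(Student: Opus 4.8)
The plan is to prove the two claims of the lemma separately: first the language characterization (the words accepted by $\mathcal T_{\SV}$ are exactly those encoding plans over $\SV$), and then the size bound, which is essentially immediate from the construction.

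For the \textbf{language characterization}, I would argue by double inclusion, after first making precise what ``encoding a plan over $\SV$'' means. A word $\sigma$ encodes a plan over $\SV$ when (a) it weakly-encodes a plan in the sense of \cref{def:weak:encode} (so the value carried at the end of each token matches the value carried at its start), and (b) it respects the transition functions, i.e.\ whenever a token with value $v'$ is immediately followed by a token with value $v''$ in the timeline for $x$, we have $v'' \in T_x(v')$, and every value used lies in the appropriate domain $V_x$. The crux is that the notion of \emph{compatibility} built into $\delta_{\SV}$ is designed to enforce exactly these two conditions step by step. First I would observe that $\mathcal T_{\SV}$ never leaves the sink once it enters it, and that acceptance (membership in $F_{\SV} = Q_{\SV}\setminus\{\sinkT\}$) is equivalent to the run never reaching \sinkT. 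Hence a word is accepted iff at every step the input symbol is compatible with the current state. For the forward inclusion, I would show by induction on the length of the prefix read that if the run has not entered \sinkT, then the prefix weakly-encodes a (partial) plan respecting the transition functions: the base case uses clause \textit{(i)} of compatibility, which forces the first symbol to be an initial symbol assigning $(-,v)$ with $v\in V_x$ to every $x$, while the inductive step uses clause \textit{(ii)}, where the constraint $v''\in T_x(v')\cap V_x$ enforces the transition-function condition and the matching of $v'$ (the recorded ``current'' value in the state) with the first component of the incoming pair enforces weak encoding. The key bookkeeping point is that the state $\sigma_1=(v,v')$ correctly records the current value $v'$ of each variable, which the transition maintains via the $\circlearrowleft$/update rule, so that the ``$v'$'' appearing in clause \textit{(ii)} is exactly the value that must be matched by the next token's start. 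For the converse inclusion, given a word that encodes a plan, I would check that at each step the relevant compatibility clause holds, so the run never enters \sinkT\ and the word is accepted.

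For the \textbf{size bound}, I would simply note that $|Q_{\SV}|\le|\Sigma_\SV|+2$ by construction, and that the excerpt already established $|\Sigma_\SV| = |V|^{|\SV|} + (|V|^2+1)^{|\SV|}$, which is at most exponential in the size of $P$; since the transition function is a single-valued map $Q_{\SV}\times\Sigma_\SV\to Q_{\SV}$, the whole DFA has size polynomial in $|Q_{\SV}|$ and $|\Sigma_\SV|$, hence at most exponential in the size of $P$.

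The main obstacle I expect is not any deep argument but the careful alignment of three indexing conventions: the positions $\mathit{changes}(x)$ where variable $x$ changes (used in the weak-encoding definition), the time points at which tokens start and end in the induced plan, and the states of the run of $\mathcal T_{\SV}$, which lag one step behind by recording the \emph{previous} value in their first component. I would have to be scrupulous that the ``previous value'' $v'$ stored in a state of the form $(v,v')$ after reading a symbol is precisely the one against which the first component of the next non-$\circlearrowleft$ symbol is compared, and that the $\circlearrowleft$ symbols (which do not correspond to token boundaries) are threaded through correctly so that the recorded value persists across steps where the variable does not change. Getting this correspondence exactly right is the only genuinely delicate part; everything else reduces to an induction driven by the two compatibility clauses.
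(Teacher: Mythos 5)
The paper states this lemma without any proof (it is the one result in the paper given no argument at all, presumably for space), so there is no reference proof to compare against; judged on its own, your proposal is correct. Acceptance by $\mathcal T_{\SV}$ is indeed equivalent to never entering \sinkT, the two compatibility clauses enforce precisely the weak-encoding, domain, and transition-function conditions, and your double-inclusion induction (with the invariant that the state correctly records the current value of each variable across $\circlearrowleft$ steps) together with the bound $|Q_{\SV}| \le |\Sigma_\SV| + 2$ is the natural, intended proof of both claims.
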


%%% Local Variables:
%%% mode: latex
%%% TeX-master: "main"
%%% End:

%!TeX root = ../submission.tex

\section{DFA Accepting Solution Plans}
\label{sec:dfaForSolutions}

%Every existential statement has the form $\exists a_1[x_1=v_1]a_2[x_2=v_2]
%\ldots a_n[x_n=v_n]. \rulebody$, where $\rulebody$ is a conjunction of atoms,
%thus identifying a preorder over terms $\tokstart(a)/\tokend(a)$ occurring in
%$\hat \rulebody$.
In this section, we go through the construction of an automaton recognizing solution plans for a planning problem. Towards that, it will come in handy to define some auxiliary structures, namely \emph{blueprints}, \emph{snapshots} and \emph{viewpoints}; moreover, we will define how these structures evolve and give a high-level intuition for each of them.

Let $P = (\SV, S)$ be an eager qualitative timeline-based planning problem,
and let $V = \cup_{x\in \SV}V_x$. We first show how to build a DFA
$\automAP$, whose size is at most exponential in the size of $P$, that
accepts exactly those words encoding solutions plans for $P$ when
restricted to words encoding plans over \SV. In different terms, if a word
encodes a plan over \SV, then it is accepted by \automAP if and only if it
encodes a solution plan for $P$. However, \automAP may also accept words
that do not encode a plan over \SV. Therefore, we need the intersection of
such a DFA \automAP with DFA $\mathcal T_{\SV}$ from the previous section.%,
%whose size is at most exponential in the size of $P$, and accepts the
%language corresponding to the set of solution plans for $P$, according to
%the encoding presented in Section~\ref{sec:wordsToPlans}.

In the following, we use \emph{preorders} to represent the ordering relation imposed by synchronization rules.
Each existential statement of the form $\exists a_1[x_1=v_1]a_2[x_2=v_2]
\ldots a_n[x_n=v_n]. \rulebody$, with $\rulebody$ conjunction of atoms,
identifies a \emph{preorder} whose domain is the set of terms
$\tokstart(a)/\tokend(a)$ occurring in $\rulebody$, and where term $t_1$
precedes term $t_2$ in the preorder whenever $t_1 \before t_2$ belongs to
$\hat \rulebody$. 

For a preorder $\preorderP$, we denote by $\domainP$ its domain and by $\preceqP$ the ordering relation. Moreover, we use $x \equivP y$ to denote the fact that both $x \preceqP y$ and $y \preceqP x$ hold, and $x \precP y$ to denote the fact that $x \preceqP y$ holds but $y \preceqP x$ does not. Finally, we denote by $[x]_{\equivP}$ the equivalence class of $x$ with respect to $\equivP$ for every $x \in \domainP$, that is, $[x]_{\equivP} = \{ y \in \domainP \mid y \equivP x \}$. We omit the subscript $_\preorderP$ when it is clear from the context. A preorder $\preorderP$ induces a directed acyclic graph (DAG) $G=(V,A)$, where $V$ is the set of equivalence classes, that is, $V = \{ [x]_\equiv \mid x \in \domainP \}$, and, for every $x,y \in \domainP$ there is an arc from $[x]_\equiv$ to $[y]_\equiv$ in $A$ (denoted by $([x]_\equiv, [y]_\equiv) \in A$ or $[x]_\equiv \rightarrow [y]_\equiv$ when set $A$ is clear from the context) if and only if $x \prec y$ and there is no $w \in \domainP$ such that $x \prec w$ and $w \prec y$. Clearly, there is a path from $[x]_\equiv$ to $[y]_\equiv$ (denoted by $[x]_\equiv \rightarrow^* [y]_\equiv$) if and only if $x \preceq y$.
Therefore, given an existential statement $\E$ occurring in a synchronization rule $\Rule$, we refer to the associated preorder and DAG as, respectively, $\preorderE$ and $G_\E$.

It is important to observe that a conjunction of atoms $\rulebody$ within
an existential statement \E contains atoms of both forms $t_1 \before t_2$
and $t_1 \before* t_2$. To keep track of these different constraints in DAG
$G_\E = (V,A)$ associated with $\E$, we identify the subset $A_< \subseteq
A$ of arcs of $G_\E$ as the set $A_< = \{ ([x]_\equiv , [y]_\equiv) \in A \mid x \before* y \in \hat \rulebody \}$. We sometimes write $[x]_\equiv \Rightarrow [y]_\equiv$ for $([x]_\equiv,[y]_\equiv) \in A_<$, when $A$ is clear from the context. Figure 1 shows such a difference. %non riesco ad usare cref

% \darioNote{give an example of DAG associated with an existential statement \E in
%   a synchronization rule \Rule, emphasizing the difference between edges in $A$
%   and edges in $A_<$}

\begin{figure}[t]
\centering
\begin{tikzpicture}

% Draw the first circle
\draw (0,0) circle (0.8cm);

% Draw the second circle
\draw (3,0) circle (0.8cm);

% Draw the third circle
\draw (6,0) circle (0.8cm);

% Add an arrow between c_1 and c_2
\draw [->,>=stealth, double distance = 0.05cm] (0.8,0) -- (2.2,0) node [midway,above] {};

% Add an arrow between c_2 and c_3
\draw [->,>=stealth] (3.8,0) -- (5.2,0) node [midway,above] {};

% Label the circles
\node [align=center] at (0,0) {$\tokstart(a_0)$ \\ $ \tokstart(a_1)$};
\node at (3,0) {$\tokend(a_0)$};
\node at (6,0) {$\tokend(a_1)$};

% Draw the first circle
\draw (0,-2) circle (0.8cm);

% Draw the second circle
\draw (3,-2) circle (0.8cm);

% Draw the third circle
\draw (6,-2) circle (0.8cm);

% Add an arrow between c_1 and c_2
\draw [->,>=stealth, double distance = 0.05cm] (0.8,-2) -- (2.2,-2) node [midway,above] {};

% Add an arrow between c_2 and c_3
\draw [->,>=stealth, double distance = 0.05cm] (3.8,-2) -- (5.2,-2) node [midway,above] {};

% Label the circles
\node [align=center] at (0,-2) {$\tokstart(a_0)$ \\ $\tokstart(a_1)$};
\node at (3,-2) {$\tokend(a_0)$};
\node at (6,-2) {$\tokend(a_1)$};
\end{tikzpicture}

\label{fig:dag-begin-strict-nonstrict}
\caption{
  Above, we show the blueprint for the unique existential statement in the rule
  $a_0[x_0=v_0] \implies \exists a_1[x_1=v_1].
  (\tokstart(a_0) = \tokstart(a_1) \wedge \tokend(a_0) \before \tokend(a_1))$,
  from \cref{sec:fragment}.
  It forces token $a_0$ to either be a prefix of or coincide with token $a_1$.
  Below, the blueprint obtained replacing $\tokend(a_0) \before \tokend(a_1)$
  with $\tokend(a_0) \before* \tokend(a_1)$, that forces $a_1$ to be a strict
  prefix of $a_1$.}
\end{figure}

Let \E be an existential statement occurring in a rule \Rule and $G_\E$ the DAG associated with \E. The set of \emph{events associated with a vertex} $[x]_\equiv$ of $G_\E$, denoted by \eventsTwoGExequiv, is the smallest set such that if $\tokstart(a) \in [x]_\equiv$ (resp., $\tokend(a) \in [x]_\equiv$) and $a[y=v]$ either occurs in \E or is the trigger of \Rule, then $\mathit{start}(y,v) \in \eventsTwoGExequiv$ (resp., $\mathit{end}(y,v) \in \eventsTwoGExequiv$). The set of \emph{events associated with a subset $V'$ of vertices} of $G_\E$, denoted by \eventsTwoGEVprime, is the set $\bigcup_{v \in V'} \eventsTwoGEv$.

\subsection{Blueprints, Snapshots, and Viewpoints}
% We can see a DAG as a static structure describing the ordering relations in an existential statement \E and call it a \emph{blueprint} for \E
A DAG associated with an existential statement \E is also called a \emph{blueprint} for \E. A \emph{snapshot} for an existential statement \E is a pair $(G, K)$, where $G = (V,A)$ is a blueprint for \E and $K \subseteq V$ is a \emph{downward closed} subset of vertices of $G$, that is, $v \in K$ implies $v' \in K$ for all $v' \in V$ with $v' \rightarrow^* v$. The number of different snapshots for \E is at most $2^{|V|}$, hence at most exponential in the size of $P$, denoted by $|P|$. A \emph{viewpoint} \viewpointV for a rule \Rule is a set of snapshots for existential statements in \Rule, at most one for each statement. Let $n_\Rule$ be the number of existential statements in \Rule; then, it is easy to see that the number of different viewpoints for \Rule is at most $(2^{|P|})^{n_{\Rule}}$, hence exponential in the size of $P$. If $K = \emptyset$ for all $(G,K) \in \viewpointV$, then \viewpointV is the \emph{initial} viewpoint of \Rule; analogously, if $K$ is the entire set of vertices of $G$, for some $(G,K) \in \viewpointV$, then \viewpointV is a \emph{final} viewpoint of \Rule.

Intuitively, a viewpoint checks the satisfaction of a rule \Rule by recognizing when at least one existential statement has been fulfilled. This check works by collecting, for each existential statement, information about the tokens seen so far along the plan into snapshots, which are downward closed and accurately represent all relevant symbols read. How information is collected, thus how viewpoints and snapshots evolve, is explained in the following.

States of automata \automAP are sets of viewpoints containing at least one
viewpoint for each rule of $P$ (besides a fresh rejecting sink state \sinkAP);
recall that viewpoints are in turn sets of snapshots.
Therefore, to define automata runs, we first show how snapshots and
viewpoints evolve upon reading an alphabet symbol.
To this end, we need the following notions.

For a snapshot $(G,K)$, we set $\nextTwoGK = K'$, where $K'$ is the largest
downward closed subset of vertices of $G$ for which there is no pair of vertices
$v,v' \in K'\setminus K$ with $v \Rightarrow v'$.
Moreover, given an alphabet symbol $\sigma \in \Sigma_\SV$, we define
$\nextThreeGKsigma = K'$, where $K'$ is the largest downward closed subset of
vertices of $\nextTwoGK$ such that $\eventsTwoG{K' \setminus K} \subseteq
\eventsOnesigma$.
We say that snapshot $(G,K)$ is \emph{compatible} with symbol $\sigma$ if for
all $\mathit{start}(x,v) \in \eventsTwoGK$ and $\mathit{end}(x,v) \in
\eventsOnesigma \cap \eventsTwoG{V \setminus K}$, it holds that
$\mathit{end}(x,v) \in \eventsTwoG{\nextThreeGKsigma}$.
%
%\darioNote{Add intuition for three concepts above}
%

Intuitively, during a run of the automaton, a snapshot $(G,K)$ evolves by
suitably extending $K$.
$\nextTwoGK$ identifies the only vertices that can appear in such an
extension independently from the alphabet symbol read, that is, vertices in $V
\setminus K$ reachable (from $K$) without crossing arcs in $A_{<}$.
The exact extension, however, depends on the actual symbol $\sigma$ read by the
automaton: $K$ cannot be extended with events that are not included in $\sigma$.
Therefore, \nextThreeGKsigma identifies precisely how a snapshot evolves.
%
%With a slight abuse of notation, 
%
At last, observe that for a snapshot to be allowed to evolve upon
reading a symbol, it must be guaranteed that no token ending is overlooked,
which is formalized by the notion of compatibility of a snapshot with a symbol.

We can now characterize the evolution of snapshots and viewpoints when reading an alphabet symbol $\sigma \in \Sigma_\SV$. The \emph{evolution of a snapshot} $(G,K)$ when reading $\sigma$, denoted \evolsnapGKsigma, is snapshot $(G, \nextThreeGKsigma)$, if $(G,K)$ is compatible with $\sigma$; it is undefined otherwise. The \emph{evolution of a viewpoint} \viewpointV when reading $\sigma$, denoted \evolviewVsigma, is viewpoint $\viewpointVprime$, defined as the smallest set such that for all $(G,K) \in \viewpointV$, if \evolsnapGKsigma is defined, then $\evolsnapGKsigma \in \viewpointVprime$.

\subsection{States, Initial State, and Final States of \texorpdfstring{\automAP}{AP}}

We have already mentioned that \emph{states} of \automAP are sets of viewpoints
containing at least one viewpoint for each rule $\Rule \in S$ (recall that $S$
is the set of rules in planning problem $P$), besides a fresh rejecting sink
state \sinkAP.
However, since it is crucial for us to bound the size of \automAP to be at most
exponential in the one of $P$, we impose the \emph{linearity condition},
formalized in what follows.

First, recall that, given a rule \Rule, featuring existential statements $\E_1, \ldots, \E_{n_\Rule}$, a viewpoint \viewpointV for \Rule only contains at most one snapshot for each existential statement in \Rule; therefore, it holds that $|\viewpointV| \leq n_\Rule$ and there is a partial surjective function $f_\viewpointV : \{ \E_1, \ldots, \E_{n_\Rule} \} \rightarrow \viewpointV$, where $f_\viewpointV(\E)$ is the only snapshot for \E in \viewpointV, if any, for all $\E \in \{ \E_1, \ldots, \E_{n_\Rule} \}$.

Now, for all rules $\Rule \in S$, let \viewpointsetR be the set of viewpoints
for \Rule, and let $\viewpointsetP = \bigcup_{\Rule \in S} \viewpointsetR$.
We define an ordering relation $\preceq$ between viewpoints: for all
$\viewpointV, \viewpointVprime \in \viewpointsetP$, it holds that $\viewpointV
\preceq \viewpointVprime$ if and only if
\begin{enumerate*}[label={\it (\roman*)}]
\item $\viewpointV, \viewpointVprime \in \viewpointsetR$ for some $\Rule \in S$,
\item $\mathit{dom}(f_\viewpointVprime) \subseteq
  \mathit{dom}(f_\viewpointV)$,\footnote{For a partial function $f$, we denote by
    $\mathit{dom}(f)$ the set of elements where $f$ is defined.} and
\item for all $\E \in \mathit{dom}(f_{\viewpointVprime})$, we have that
  $f_{\viewpointV}(\E) = (G,K)$, $f_{\viewpointVprime}(\E) = (G,K')$, and $K
  \subseteq K'$.
\end{enumerate*}
%
%\darioNote{Add intuition for relation $\preceq$}
%
Intuitively, $\viewpointV \preceq \viewpointVprime$ captures the fact that
\viewpointVprime has gone further than \viewpointV in matching input symbols to
satisfy a rule.
Therefore, a snapshot in \viewpointV either evolved into one in
\viewpointVprime, according to the symbols read, or has disappeared because it
is not compatible with some of the symbols read, and thus it cannot be used
anymore to satisfy the rule.

%An insight over point $(ii)$, the domain of $\viewpointVprime$ is a subset of the domain of $\viewpointV$ because, while reading a word, we may assert the fact that a snapshot for an existential statement in $\mathit{dom}(f_\viewpointV)$ cannot be satisfied anymore, thus dropping the existential statement.

At this point, we can formalize the linearity condition, crucial to constrain
the size of \automAP (Lemma~\ref{lem:automatonAP}).

\begin{definition}[Linearity condition] A set of viewpoints $\viewpointset$ satisfies the \emph{linearity condition} if for all viewpoints $\viewpointV, \viewpointVprime \in \viewpointset$ and rules $\Rule \in S$, if $\viewpointV, \viewpointVprime \in \viewpointsetR$, then $\viewpointV \preceq \viewpointVprime$ or $\viewpointVprime \preceq \viewpointV$ holds. \end{definition}
\noindent Intuitively, we impose all viewpoints for the same rule in a state of \automAP to be linearly ordered.

We are now ready to formally characterize the set of \emph{states} of \automAP,
consisting of the sets $\viewpointset \subseteq \viewpointsetP$ of viewpoints
that contain at least one viewpoint for each rule $\Rule \in S$ and that satisfy
the linearity condition, and including, in addition, a fresh \emph{rejecting
  sink state} \sinkAP.
We denote it by $Q_P$.

The \emph{initial state} $q^0_P$ of \automAP is the set $\{ \viewpointV^0_{\Rule} \mid \Rule \in S \}$, where $\viewpointV^0_\Rule$ is the initial viewpoint of rule \Rule.

Towards a definition of the set $F_P$ of \emph{final states} of \automAP, we introduce the notion of \emph{enabled viewpoints}. A viewpoint \viewpointV for rule $\Rule \in S$ is \emph{enabled} if either \Rule is triggerless or \Rule has trigger token $a_0$ and $\tokstart(a_0) \in K$ for some $(G,K) \in \viewpointV$. A state $q$ of \automAP is \emph{final} if every enabled viewpoint therein is final.

\subsection{Transition Function of \texorpdfstring{\automAP}{AP}}

The last step of our construction is the definition of the \emph{transition function} $\delta_P$ for automaton \automAP.

To this end, we introduce the notion of alphabet symbol \emph{enabling} a
viewpoint \viewpointV along with the one of states of \automAP \emph{compatible}
with an alphabet symbol.
Let \viewpointV be a viewpoint for a non-triggerless rule \Rule with trigger
token $a_0$ and $\sigma \in \Sigma_\SV$ an alphabet symbol.
We say that $\sigma$ \emph{enables} \viewpointV if there is $(G,K) \in
\viewpointV$ with $\tokstart(a_0) \in \nextThreeGKsigma$.
Moreover, we say that a state $q \in Q_P \setminus \{ \sinkAP \}$ is
\emph{compatible} with $\sigma$ if for all non-triggerless rules $\Rule \in S$,
with trigger token $a_0[x_0 = v_0]$, if $\mathit{start}(x_0,v_0) \in
\eventsOnesigma$, then there is a viewpoint $\viewpointV \in q$ such that
$\sigma$ enables \viewpointV.

We are now ready to define the \emph{transition function} $\delta_P$ of \automAP. For all $q \in Q_P$ and alphabet symbol $\sigma \in \Sigma_\SV$: \begin{itemize}
\item if $q = \sinkAP$ or $q$ is not compatible with $\sigma$, then $\delta(q,\sigma) = \sinkAP$; \item otherwise, $\delta(q,\sigma) = q'$, where $q'$ is the smallest set such that for all $\viewpointV \in q$
    \begin{itemize} 
    \item $\evolviewVsigma \in q'$ and 
    \item if $\sigma$ enables \viewpointV, then $\viewpointV \in q'$.
    \end{itemize}
\end{itemize}

\begin{lemma}
  \label{lem:automatonAP}
  Let $P = (\SV, S)$ be an eager qualitative timeline-based planning problem.
  Then, each finite word over $\Sigma_\SV$ that encodes a plan over \SV is
  accepted by \automAP if and only if it encodes a solution plan for $P$.
  Moreover, the size of $\automAP$ is at most exponential in the size of $P$.
\end{lemma}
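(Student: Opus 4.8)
The plan is to prove the two claims of Lemma~\ref{lem:automatonAP} separately: first the language-correctness statement (a word encoding a plan is accepted iff it encodes a solution plan), then the size bound. For correctness, I would fix a word $\sigma = \seq{\sigma_0,\ldots,\sigma_{m-1}}$ that encodes a plan over $\SV$ (in the sense of \cref{def:weak:encode} together with the $T_x$-compatibility from \cref{sec:dfaForPlans}), let $\rho = \seq{q_0,\ldots,q_m}$ be the unique run of \automAP on $\sigma$, and establish a \emph{run invariant} that ties the combinatorial state $q_i$ to the geometric structure of the induced plan. Concretely, I would show by induction on $i$ that, as long as $q_i \neq \sinkAP$, for every rule $\Rule$ and every viewpoint $\viewpointV \in q_i$, each snapshot $(G,K) \in \viewpointV$ has its downward-closed set $K$ equal to exactly the set of vertices of $G_\E$ whose associated events (\eventsTwoGEv) have already occurred at positions $\le i$ in the plan, \emph{consistently} with a concrete choice of witness tokens $a_1,\ldots,a_n$ for the existential statement $\E$. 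The operators \nextTwoGK and \nextThreeGKsigma are engineered precisely so that $K$ advances through $G_\E$ respecting the $A_<$ (strict) arcs and picking up only events present in the current symbol; the compatibility notion ensures no required token-ending is skipped.

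\textbf{Forward direction (accepted $\Rightarrow$ solution).} If $\sigma$ is accepted, then $q_m \in F_P$, so $q_m \neq \sinkAP$ and every \emph{enabled} viewpoint in $q_m$ is final. For a triggerless rule, its viewpoint is always enabled, and finality ($K$ = all vertices of some $G$) together with the invariant yields witness tokens realizing some existential statement $\E$, hence the rule is satisfied. For a triggered rule $a_0[x_0=v_0]$, the crux is that \emph{every} occurrence of the trigger token must be matched. Here I would use the structure of $\delta_P$: whenever a trigger start $\mathit{start}(x_0,v_0)$ is read, compatibility of the state with $\sigma$ forces the existence of an enabling viewpoint, and the ``if $\sigma$ enables $\viewpointV$ then $\viewpointV \in q'$'' clause \emph{spawns a fresh copy} of that viewpoint tracking this particular trigger occurrence. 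Since that copy survives and must be final at the end, the corresponding existential statement is satisfied for that trigger. This is where the \emph{eagerness} of the rules (\cref{def:eager:rule}) is essential: Conditions 1--3 guarantee that the greedy, deterministic advancement of $K$ never needs to defer to a later analogous event, so the invariant is maintained without nondeterministic guessing. \textbf{Backward direction (solution $\Rightarrow$ accepted)} runs symmetrically: given a solution plan, for each rule I exhibit satisfied existential statements and witness tokens, and show by the same invariant that the tracked snapshots reach finality while the run avoids \sinkAP; compatibility is never violated precisely because the plan genuinely satisfies every triggered rule at every trigger occurrence.

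\textbf{For the size bound}, I would argue as already sketched in the construction: a snapshot for $\E$ is a downward-closed subset of the $\le |P|$ vertices of $G_\E$, so there are at most $2^{|P|}$ of them; a viewpoint picks at most one snapshot per existential statement; and the \emph{linearity condition} is what collapses what could be exponentially many viewpoints per rule into a \emph{linearly ordered chain}. Since viewpoints for a fixed rule in any state form a $\preceq$-chain, and each step of the chain strictly increases some $K$ (a subset of a set of size $\le |P|$), each chain has length polynomial in $|P|$; hence a state is described by polynomially many viewpoints per rule, each of size at most exponential, giving a singly-exponential bound on $|Q_P|$ overall.

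\textbf{The main obstacle} I expect is the backward direction for triggered rules, specifically proving that the deterministic, eager bookkeeping never ``misses'' a valid witness. One must show that if the plan satisfies a triggered rule at a given trigger occurrence via \emph{some} witness tokens, then the particular viewpoint copy spawned at that trigger's start actually evolves to a final snapshot under the symbols dictated by the plan — rather than being killed by an incompatibility or stalled at a non-final $K$. This is exactly the content that the eager conditions are meant to secure: I would isolate a lemma stating that, for an eager rule, whenever events compatible with an existential statement appear in the plan in an order consistent with $\preceqE$, the greedy update \nextThreeGKsigma advances $K$ maximally and correctly, and verify that Conditions 1--3 rule out the problematic ``coincidence-ambiguity'' configurations (the non-eager counterexample in \cref{sec:fragment}) in which a start or end would have to be guessed. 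Getting this lemma airtight, and correctly handling the interaction between the $A_<$ strict arcs and the compatibility check for token endings, is the delicate part; the rest is careful but routine induction.
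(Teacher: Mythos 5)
For the only part of this lemma that the paper actually proves --- the size bound --- your argument is essentially the paper's own: the linearity condition forces the viewpoints for a fixed rule inside a state to form a $\preceq$-chain, whose length is polynomially bounded because along the chain the domain of $f_\viewpointV$ can only shrink and each $K$ can only grow (the paper bounds the chain length by $k \times k'$, with $k$ the maximum number of existential statements per rule and $k'$ the maximum number of atoms per statement); hence a state consists of at most $|S| \times k \times k'$ viewpoints drawn from the exponentially-sized pool $\viewpointsetP$, so $|Q_P| \leq |\viewpointsetP|^{(|S| \times k \times k')}$, which is singly exponential in $|P|$. Your chain-length reasoning and the paper's $k \times k'$ bound are the same argument in different words.

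For the language-correctness claim, however, the paper's proof contains nothing to compare against: it is explicitly omitted ``for lack of space.'' Your run-invariant plan --- an induction tying each snapshot's downward-closed set $K$ to the events of the induced plan read so far, with the retention clause of $\delta_P$ spawning a viewpoint copy per trigger occurrence, and eagerness ensuring the greedy update $\nextThreeGKsigma$ never misses a witness --- is a sensible outline that goes beyond what the paper offers. Do note that, as written, it is a proof plan rather than a proof: the decisive step (your deferred lemma that Conditions 1--3 of \cref{def:eager:rule} make the eager, deterministic advancement complete, i.e., the backward direction for triggered rules) is precisely where the entire difficulty of the lemma sits, and you leave it unproven --- as, to be fair, does the paper itself.
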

\begin{proof}
  For lack of space, we omit the proof of soundness showing that the automaton
  accepts the correct language as claimed.
  Instead, we show that the size of $\automAP$ is indeed at most exponential in
  the size of $P$.

  Let $k$ be the largest number of existential statements in a rule of $P$ and $k'$ the largest number of atoms in an existential statement of $P$. Thanks to the linearity rule enjoyed by states of $P$, it is not difficult to convince oneself that the number of different viewpoints for the same rule in a state $q \in Q_P$ to be at most $k \times k'$. Thus, each state in $Q_P$ contains at most $|S| \times k \times k'$ different viewpoints (the product of the number of rules in $P$ by the number of different viewpoints for the same rule).

  Therefore, the size of $Q_P$ is at most $|\viewpointsetP|^{(|S| \times k
    \times k')}$.
  Clearly, $(|S| \times k \times k')$ is at most polynomial in the size of $P$.
  Since $|\viewpointsetP| \leq \sum_{\Rule \in S} |\viewpointsetR|$ and, as
  already pointed out, $|\viewpointsetR|$ is at most exponential in the size of
  $P$, we can conclude that the size of $Q_P$ is at most exponential in the size
  of $P$.
\end{proof}

\begin{theorem}\label{thm:automaton}
  Let $P = (\SV, S)$ be an eager qualitative timeline-based planning problem.
  Then, the words accepted by the intersection automaton of $\automAP$ and
  $\mathcal T_{\SV}$ are exactly those encoding solution plans for $P$.
  Moreover, the size of the intersection automaton of $\automAP$ and $\mathcal
  T_{\SV}$ is at most \emph{exponential} in the size of $P$.
\end{theorem}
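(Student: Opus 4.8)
The plan is to obtain both claims from the two preceding lemmas by means of the standard product construction for the intersection of deterministic finite automata. First I would recall that, since $\automAP$ and $\mathcal T_\SV$ are both DFAs over the common alphabet $\Sigma_\SV$, their intersection automaton is again a DFA: its states are the pairs in $Q_P \times Q_\SV$, its transition function runs the two components synchronously, and a pair is accepting exactly when both of its coordinates are. By construction, a word $w \in \Sigma_\SV^*$ is accepted by the intersection automaton if and only if $w \in L(\automAP) \cap L(\mathcal T_\SV)$. Determinism is preserved throughout, which is the whole point of performing the construction on two DFAs rather than determinizing.

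For the language claim I would prove the two inclusions separately, in each case using the correctness lemma for $\mathcal T_\SV$ (from \cref{sec:dfaForPlans}) to pin down $L(\mathcal T_\SV)$ as the set of words encoding a plan over $\SV$, together with \cref{lem:automatonAP} for $\automAP$. Suppose first that $w$ encodes a solution plan for $P$. Then $w$ encodes, in particular, a plan over $\SV$ (every solution plan is a plan, by \cref{def:plan}), so $w \in L(\mathcal T_\SV)$; and since $w$ encodes a plan, \cref{lem:automatonAP} applies and gives $w \in L(\automAP)$. Hence $w$ lies in the intersection. Conversely, if $w$ is accepted by the intersection automaton, then $w \in L(\mathcal T_\SV)$ already guarantees that $w$ encodes a plan over $\SV$, so the hypothesis of \cref{lem:automatonAP} is met; since moreover $w \in L(\automAP)$, that lemma yields that $w$ encodes a solution plan for $P$. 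This shows that the language of the intersection automaton is exactly the set of words encoding solution plans for $P$.

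For the size bound I would simply multiply the two state counts: the product construction yields at most $|Q_P| \cdot |Q_\SV|$ states. Both factors are at most exponential in $|P|$ by \cref{lem:automatonAP} and the correctness lemma for $\mathcal T_\SV$; writing these bounds as $2^{p(|P|)}$ and $2^{q(|P|)}$ for suitable polynomials $p,q$, their product is $2^{p(|P|)+q(|P|)}$, which is still exponential in $|P|$ because $p+q$ is again polynomial. Thus the intersection automaton has size at most exponential in the size of $P$.

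I do not expect a genuine obstacle here, as the heavy lifting---soundness of $\automAP$ on plan-encoding words and both exponential bounds---has already been discharged by the two lemmas. The one point that requires care is that $\automAP$ in isolation may accept words that do not encode any plan (as observed just before \cref{lem:automatonAP}), so that lemma characterizes its behaviour only on the sublanguage of plan-encoding words. The argument above sidesteps this precisely because solution plans are contained among plans: the $\mathcal T_\SV$ component always supplies the plan-encoding hypothesis under which \cref{lem:automatonAP} may legitimately be invoked, and this is exactly why the intersection, rather than $\automAP$ alone, is needed.
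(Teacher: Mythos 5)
Your proposal is correct and follows exactly the route the paper intends: the paper states \cref{thm:automaton} without an explicit proof precisely because it is the standard product construction combined with the correctness lemma for $\mathcal T_{\SV}$ and \cref{lem:automatonAP}, which is what you do. Your handling of the subtle point---that \cref{lem:automatonAP} only characterizes $\automAP$ on plan-encoding words, so the $\mathcal T_{\SV}$ component is needed to supply that hypothesis---matches the paper's own remark preceding the lemma.
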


% \subsection{Soundness and completeness}

% \darioNote{non so se riusciamo a fare questa sezione con la prova di correttezza
%   della costruzione}

% \begin{lemma}
%   \label{lemma:automaton-correctness}
% %
%   Given a qualitative timeline-based planning problem $P$ interpreted over the
%   deterministic fragment and its associated DFA $\automatonP$, there is a
%   solution plan for $P$ if and only if $\mathcal{L}(\automatonP)\ne\emptyset$.
% \end{lemma}

%%% Local Variables:
%%% mode: latex
%%% TeX-master: "main"
%%% End:

%!TeX root =../submission.tex

\section{A Maximal Subset of Allen's Relations} \label{sec:allen}

%Allen's interval algebra is a formalism, introduced in~\cite{Allen83}, for
%temporal reasoning, based on the set of all possible relations between pairs of
%time intervals over a linear order.

Allen's interval algebra is a formalism for temporal reasoning introduced in~\cite{Allen83}. 
It identifies all possible relations between pairs of time intervals over a linear order and specifies a machinery to reason about them.
%
%can be used to reason about temporal descriptions of events. 
In this section, we isolate the maximal subset of Allen's relations captured by
the eager fragment of qualitative timeline-based planning. To this end, we show
how to map Allen's relations over tokens in terms of their endpoints, that is,
as conjunctions of atoms over terms $\tokstart(a), \tokstart(b), \tokend(a),
\tokend(b)$, for token names $a$ and $b$.
%
%, like the clauses $\rulebody$ occurring in the rules of a timeline-based planning problem.
%
Then, we check which relation encoding satisfies the conditions of
\cref{def:eager:rule}.
%
%this conjunction is a clause \rulebody for some synchronization rule \Rule.
%
Let $a, b \in \toknames$.
\begin{itemize}
\item$a \ibefore b$ ($b \iafter a$) can be defined as $\tokend(a) \before* \tokstart(b)$.

\item$a \meets b$ ($b \ismet a$) can be defined as $\tokend(a) = \tokstart(b)$.

\item$a \ends b$ ($b \isended a$) can be defined as $\tokstart(b) \before* \tokstart(a) \land \tokend(a) = \tokend(b)$.

\item$a \starts b$ ($b \isstarted a$) can be defined as $\tokstart(a) = \tokstart(b) \land \tokend(a) \before* \tokend(b)$.

\item$a \overlaps b$ ($b \isoverlapped a$) can be defined as $\tokstart(a) \before* \tokstart(b) \land \tokstart(b) \before* \tokend(a) \land \tokend(a) \before* \tokend(b)$.

\item$a \during b$ ($b \icontains a$) can be defined as $\tokstart(b) \before* \tokstart(a) \land \tokend(a) \before* \tokend(b)$.

\item$a \isequal b$ can be defined as $\tokstart(a) = \tokstart(b) \land \tokend(a) = \tokend(b)$.
\end{itemize}

It is not difficult to see that, if one of the tokens, let's say $a$, is the
trigger token, then the encodings not complying with \cref{def:eager:rule} are
the ones for Allen's relations $\ends$, $\isended$, $\overlaps$,
$\isoverlapped$, and $\during$.
Thus, the maximal subset of Allen's relations that can be captured by an
instance of the eager fragment of the timeline-based planning problem consists
of relations $\ibefore$, $\iafter$, $\meets$, $\ismet$, $\starts$, $\isstarted$,
$\icontains$, and $\isequal$.

As an example, consider relation $\overlaps$ and let $\rulebody = \{
\tokstart(a) \before* \tokstart(b), \tokstart(b) \before* \tokend(a),
\tokend(a) \before* \tokend(b) \}$ be its encoding.
Clearly, the transitive closure $\hat{\rulebody}$ of \rulebody
(cf. Section~\ref{sec:fragment}) includes also $\tokstart(a) \before
\tokstart(b)$ and $\tokend(a) \before \tokend(b)$ but it does not include
$\tokstart(b) \before \tokstart(a)$, thus violating Condition 2 of
\cref{def:eager:rule}.
A similar argument can be used for relations $\ends$, $\isended$,
$\isoverlapped$, and $\during$.

If, instead, none of the token is a trigger token, then the only Allen's
relations not violating any of the conditions of \cref{def:eager:rule} are
$\ibefore$, $\iafter$, $\meets$, and $\ismet$.
We omit the details.

%Now, given Allen's relation, we want to study if the transitive closure $\hat{\rulebody}$ of its clause \rulebody respects \cref{def:eager:rule}. We claim that the transitive closure $\hat{\rulebody}$ obtained from relations $\overlaps$, $\isoverlapped$, $\during$, and $\icontains$ does not respect Condition 1 of \cref{def:eager:rule}

%Let us focus on the $\overlaps$ relation. The transitive closure would be $\hat{\rulebody} = \set{ \ldots \tokstart(a) \before* \tokstart(b), \tokstart(a) \before \tokstart(b), \tokstart(b) \before* \tokend(a), \tokstart(b) \before \tokend(a) \ldots}$, but $\tokstart(b) \before \tokstart(a) \notin \hat{\rulebody}$.

%Similarly, this violation of Condition 1 of \cref{def:eager:rule} holds for $\isoverlapped$, $\during$, and $\icontains$.

%Therefore, the maximal fragment is identified by the remaining relations $
%\set{\ibefore, \iafter, \meets, \ismet, \ends, \isended, \starts, \isstarted,
%  \isequal}$ for which we do not report the whole reasoning, but it is easy to
%see that they do not violate any of the Conditions imposed by
%\cref{def:eager:rule}.

%%% Local Variables:
%%% mode: latex
%%% TeX-master: "main"
%%% End:

%\input{synthesis}
%!TeX root = ../submission.tex

\section{Conclusions}
\label{sec:conclusion}

% Previous techniques in timeline-based planning are not candidate solutions for lowering the complexity of strategy existence as they involve nondeterminism.
In this paper, 
% motivated by the high complexity of strategy existence (2EXPTIME-complete),
we identified a meaningful fragment of timeline-based planning
whose solutions
%do not require a determinization step.
can be recognized by DFAs of singly exponential size. Specifically, we
identified restrictions on the allowed synchronization rules, which we
called \emph{eager rules}, for which we showed how to build the
corresponding deterministic automaton of exponential size, that
can then be directly exploited to synthesize strategies.
%do not require a determinization step. 
%This characteristic is essential to achieve an exponential time complexity since such a problem requires a deterministic arena, \ie, automaton, and passing through a determinization would be a suboptimal procedure.
%We proved our claim by building a DFA of singly exponential size concerning
%the size of the planning problem (over the restricted fragment) that
%recognizes solution plans. 
Moreover, we isolated a maximal subset of Allen's relations captured by
such a fragment. 

Whether the fragment of timeline-based planning identified by the eager rules is
maximal or not is an open question currently under study.
Further research directions include
%
%a formal adaptation of the proposed DFA construction to a suitable game arena is of primary concern to concretely implement standard techniques in strategy synthesis. Then, 
a parametrized complexity analysis over the number of synchronization rules and
a characterization in terms of temporal logics, like the one
in~\cite{della2017bounded}.

%%% Local Variables:
%%% mode: latex
%%% TeX-master: "main"
%%% End:

\bibliographystyle{eptcs}
\bibliography{biblio}
\end{document}